\def\be{\begin{equation}}
\def\ee{\end{equation}}
\def\beq{\begin{eqnarray}}
\def\eeq{\end{eqnarray}}
\theoremstyle{definition}
\theoremstyle{theorem}
\newtheorem{theorem}{Theorem}
\theoremstyle{corollary}
\begin{document}
\title{Bounds on transport from hydrodynamic stability}
\author{L.~Gavassino}
\affiliation{Department of Mathematics, Vanderbilt University, Nashville, TN, USA}

\begin{abstract}
It was recently shown that the dispersion relations describing singularities of retarded two-point functions in causal quantum field theories always satisfy the fundamental inequality $\mathfrak{Im} \, \omega \leq |\mathfrak{Im} \, k|$, and that several rigorous bounds on transport coefficients follow directly from such inequality. Here, we prove that the same inequality, $\mathfrak{Im} \, \omega \leq |\mathfrak{Im} \, k|$, is a necessary condition for a fluid theory to be covariantly stable (i.e. stable in all frames of reference). Hence, the same bounds on transport that follow from causality in quantum field theory also emerge as stability conditions within relativistic hydrodynamics. This intimate connection between bounds from causality and bounds from stability stems from the fact that covariant stability is possible only in the presence of causality. As a quick application, we show that fluids with luminal speed of sound have vanishing viscosities.
\end{abstract}

\maketitle

\section{Introduction} 

The linear response properties of a relativistic fluid can be analysed within quantum field theory through the study of the two-point (thermal) retarded correlator 
\begin{equation}\label{correlator}
G^R(a,b)=-i\Theta(a^0-b^0)\braket{[\psi(a),\psi(b)]}_T ,
\end{equation}
where $a$ and $b$ are two events in Minkowski spacetime, and $\psi$ is a local operator. The singularities of the Fourier transform of $G^R(0,b)$ define a collection of dispersion relations $\omega(k)$, which can be interpreted as the collective excitations of the fluid \cite{FlorkowskiReview2018}. Recently, Heller, Serantes, Spali\'{n}ski, and Withers (HSSW) \cite{Heller2022} have considered the implications that the principle of causality has on these dispersion relations. Starting from the well-known fact that $[\psi(0),\psi(b)]=0$ whenever $b$ is outside the lightcone \cite{Peskin_book}, they were able to derive the following inequality:
\begin{equation}\label{Fundam}
\mathfrak{Im}\, \omega(k) \leq |\mathfrak{Im} \, k|.
\end{equation}  
Then, they used it to prove a number of interesting facts:
\begin{itemize}
\item[(i)] If $\omega(k):\mathbb{C}\rightarrow \mathbb{C}$ is an entire function, then it is a polynomial of at most degree one;
\item[(ii)] The function $\omega(k):\mathbb{C}\rightarrow \mathbb{C}$ does not present poles or essential singularities; 
\item[(iii)] If $\omega(k)$ is a sound mode, and it is analytic for small $k \in \mathbb{C}$, the infrared speed of sound, $c_s=\omega'(0)$, does not exceed the speed of light;
\item[(iv)] For sound modes and diffusive modes, the diffusion coefficient, $D=i\omega''(0)/2$, is non-negative, and it is bounded above by $16/(3\pi K)$, where $K$ is the radius of convergence of the McLaurin series of $\omega(k)$.
\end{itemize}
It is natural to wonder whether these bounds are specific features of dispersion relations arising from an underlying causal quantum field theory, or if they are universal properties of all ``well-behaved'' fluid theories. More precisely: If we pick a generic linear system of partial differential equations describing small perturbations in a fluid, what are the mathematical assumptions that give rise to \eqref{Fundam}?

We work with Minkowski metric $(-,+,+,+)$, and adopt natural units: $c=\hbar=1$. The equilibrium state is assumed uniform (hence, non-rotating).

\section{Proof from stability}

The original derivation of \eqref{Fundam} by Heller, Serantes, Spali\'{n}ski, and Withers \cite{Heller2022}  is outlined in Appendix \ref{AAA}. Here, we present an alternative derivation which, we believe, will clarify the geometric meaning (and the importance) of this fundamental inequality.
We call ``$\Psi$'' the collection of fields of a linearised relativistic hydrodynamic theory. Aligning the $x-$axis with the (complex) wave-vector $p^\mu$ of a perturbation, so that $p^\mu=(\omega,k,0,0)$, we have the following result:
\begin{theorem}\label{theo1}
Let $\Psi(t,x) = \Psi_0 e^{i(kx-\omega t)}$ be a solution of the linearised equations of a relativistic hydrodynamic theory, where $\Psi_0 \neq 0$, $k \in \mathbb{C}$, and $\omega \in \mathbb{C}$ are constant. If the theory is linearly stable in all reference frames, then
\begin{equation}
\mathfrak{Im}\, \omega \leq |\mathfrak{Im} \, k| .
\end{equation}
\end{theorem}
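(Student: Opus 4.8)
The plan is to exploit the geometry of Lorentz boosts. If the imaginary part of the complex wave four-vector $p^\mu=(\omega,k,0,0)$ is future-directed timelike, then a suitable boost brings it to a frame in which the \emph{spatial} part of $p^\mu$ becomes purely real, so the given mode turns into a bona fide perturbation of real wavenumber that grows exponentially in time. Covariant stability forbids such a perturbation, so $\mathfrak{Im}\,p^\mu$ cannot be future-directed timelike --- which is exactly the asserted inequality.

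First I would split $p^\mu$ into real and imaginary parts, $p^\mu=\mathfrak{Re}\,p^\mu+i\,\mathfrak{Im}\,p^\mu$, with $\mathfrak{Im}\,p^\mu=(\mathfrak{Im}\,\omega,\,\mathfrak{Im}\,k,0,0)$ a genuine real four-vector. Its Minkowski square equals $(\mathfrak{Im}\,k)^2-(\mathfrak{Im}\,\omega)^2$, so $\mathfrak{Im}\,p^\mu$ is timelike iff $(\mathfrak{Im}\,\omega)^2>(\mathfrak{Im}\,k)^2$, and it is future-directed iff moreover $\mathfrak{Im}\,\omega>0$; together these two conditions read $\mathfrak{Im}\,\omega>|\mathfrak{Im}\,k|$. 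Thus proving the theorem is equivalent to showing that $\mathfrak{Im}\,p^\mu$ is not future-directed timelike.

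I then argue by contradiction: suppose $\mathfrak{Im}\,\omega>|\mathfrak{Im}\,k|$. Since $\mathfrak{Im}\,\omega>0$, the boost along $x$ with velocity $v=\mathfrak{Im}\,k/\mathfrak{Im}\,\omega$ is admissible ($|v|<1$) and it carries $\mathfrak{Im}\,p^\mu$ to $(\Omega,0,0,0)$ with $\Omega=\sqrt{(\mathfrak{Im}\,\omega)^2-(\mathfrak{Im}\,k)^2}>0$. In the new frame the wave four-vector reads $\tilde p^\mu=(\tilde\omega,\tilde k,0,0)$ with $\tilde k\in\mathbb{R}$ and $\mathfrak{Im}\,\tilde\omega=\Omega>0$. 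By Lorentz covariance of the linearised hydrodynamic equations, $\tilde\Psi(\tilde t,\tilde x)=\Lambda\Psi_0\,e^{i(\tilde k\tilde x-\tilde\omega\tilde t)}$ --- with $\Lambda$ the invertible matrix implementing the boost on the field multiplet, so $\Lambda\Psi_0\neq 0$ --- is a solution in that frame, and it is a plane wave of \emph{real} spatial wavenumber $\tilde k$ whose amplitude grows like $e^{\Omega\tilde t}$. This is a linear instability in the boosted frame, which contradicts covariant stability; hence $\mathfrak{Im}\,\omega\leq|\mathfrak{Im}\,k|$.

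The one point that requires care is nailing down the definitions: ``linearly stable in a given frame'' must mean that every plane-wave solution with real spatial wavenumber obeys $\mathfrak{Im}\,\omega\leq 0$, and ``stable in all reference frames'' that this holds in every frame obtained by a Lorentz transformation. Granted this, the argument is just the elementary observation that a future-directed timelike vector can be boosted onto the time axis, combined with covariance of the equations of motion. The borderline case $\mathfrak{Im}\,\omega=|\mathfrak{Im}\,k|$ needs no separate treatment --- there $\mathfrak{Im}\,p^\mu$ is null rather than timelike, no boost is available, and the (non-strict) inequality already holds.
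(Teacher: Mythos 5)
Your proposal is correct and follows essentially the same route as the paper's own proof: assume $\mathfrak{Im}\,\omega>|\mathfrak{Im}\,k|$, boost with velocity $v=\mathfrak{Im}\,k/\mathfrak{Im}\,\omega$ so that the wavenumber becomes real while $\mathfrak{Im}\,\omega'>0$, and conclude that the boosted frame sees a growing Fourier mode, contradicting stability in that frame. Your added remarks (the geometric reading in terms of $\mathfrak{Im}\,p^\mu$ not being future-directed timelike, and the invertibility of the field transformation ensuring $\Lambda\Psi_0\neq 0$) are consistent with, and slightly more explicit than, the paper's presentation.
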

\begin{proof}
Suppose, by contradiction, that $\mathfrak{Im}\, \omega > |\mathfrak{Im} \, k|$. Then, in particular, $\mathfrak{Im}\, \omega >0$. It follows that the real number $v:=\mathfrak{Im} \, k/\mathfrak{Im} \, \omega$ exists and is finite. Furthermore, we have that $|v|=|\mathfrak{Im} \, k|/\mathfrak{Im}\, \omega <1$. Hence, we can perform a Lorentz boost with velocity $v$:
\begin{equation}\label{Boost}
\begin{split}
t={}& \gamma (t'+vx') \, ,\\
x={}& \gamma (x'+vt') \, ,\\
\end{split}
\end{equation}
where $\gamma=(1-v^2)^{-1/2}$. In these new coordinates, the solution now reads $\Psi'(t',x') =\Psi'_0 e^{i(k'x'-\omega' t')}$, with 
\begin{equation}
\begin{split}
\omega'={}& \gamma (\omega-vk) \, ,\\
k'={}& \gamma (k-v\omega) \, .\\
\end{split}
\end{equation}
Taking the imaginary part of this equation, and recalling that $\mathfrak{Im} \, k = v \, \mathfrak{Im}\, \omega$, we finally obtain
\begin{equation}
\begin{split}
\mathfrak{Im} \, \omega'={}& \mathfrak{Im} \, \omega/\gamma >0 \, ,\\
\mathfrak{Im} \, k'={}& 0 \, .\\
\end{split}
\end{equation}
The fact that $k'$ is real implies that $\Psi'(t',x')$ is a Fourier mode, in this new reference frame. The fact that $\mathfrak{Im} \, \omega'$ is positive implies that $|\Psi'| \propto e^{(\mathfrak{Im} \, \omega')t'}$ grows exponentially in time. But the existence of a growing Fourier mode implies that, in the reference frame $\{t',x'\}$, the theory is unstable \cite{Hiscock_Insatibility_first_order}, contradicting our assumption.
\end{proof}
Theorem 1 shows that equation \eqref{Fundam}, and thus all the bounds that follow from it, is a necessary condition for covariant stability. Indeed, in their original derivation based on correlators, HSSW \cite{Heller2022} needed to assume that $G^R$ is a tempered distribution, which is similar to requiring stability (see Appendix \ref{AAA}). To appreciate why causality alone is not enough, consider the following equation\footnote{Equation \eqref{counter} is the Bludman-Ruderman model for sound in dense matter \cite{Bludman1968,FoxKuper1970}, with $\mu=1/2$ and front velocity $v_\infty=\sqrt{D}=1$.}:
\begin{equation}\label{counter}
(\partial_t^2-\partial^2_x)^2\Psi +(\partial^2_t - c_s^2 \partial^2_x)\Psi=0.
\end{equation}
Its principal part is the square of the ordinary wave operator, $\partial^2_t-\partial^2_x$. Hence, causality is respected for all values of $c_s$ \cite{CourantHilbert2_book,Rauch_book}. On the other hand, for small gradients, equation \eqref{counter} becomes $(\partial^2_t - c_s^2 \partial^2_x)\Psi \approx 0$, meaning that the system presents sound modes, with infrared speed of sound $c_s$. Hence, the bound (iii) applies, and we must require that $c_s \leq 1$. But this is not a causality condition, since \eqref{counter} is always causal. It is a stability condition. In fact, equation \eqref{counter} presents 4 dispersion relations:
\begin{equation}\label{omega2}
2\omega^2 = 1+2k^2 \pm \sqrt{1+4k^2(1-c_s^2)} \, .
\end{equation}
If $c_s^2 \leq 1$, then $\omega^2>0$ for real $k$, meaning that $\mathfrak{Im} \, \omega=0$ on all Fourier modes. If, instead, $c_s^2>1$, the dispersion relations \eqref{omega2} become complex for large $k \in \mathbb{R}$, and there is one growing Fourier mode, signalling an instability.

\section{Proof from causality}

Although causality in itself is not enough to guarantee \eqref{Fundam}, still, it is a necessary condition for covariant stability. The reason is that, if causality is violated, different observers disagree on whether a dissipative system is stable or not \cite{GavassinoSuperluminal2021}. If, instead, causality holds, then all observers must agree on the stability properties of the system. In particular, if there is at least one reference frame in which a causal theory is stable, then such theory will automatically be stable in all reference frames, and \eqref{Fundam} will also hold. For this reason, we have the following
\begin{theorem}\label{Theocaus}
Let $\Psi(t,x) = \Psi_0 e^{i(kx-\omega t)}$ be a solution of the linearised equations of a relativistic hydrodynamic theory, where $\Psi_0 \neq 0$, $k \in \mathbb{C}$, and $\omega \in \mathbb{C}$ are constant. If the theory is linearly stable in the reference frame $\{t,x \}$, and it is causal, then
\begin{equation}
\mathfrak{Im}\, \omega \leq |\mathfrak{Im} \, k| .
\end{equation}
\end{theorem}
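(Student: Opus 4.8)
The plan is to deduce Theorem~\ref{Theocaus} from Theorem~\ref{theo1} by upgrading ``stable in the frame $\{t,x \}$'' to ``stable in every frame'', with causality serving as the bridge: a causal theory cannot be stable in one frame and unstable in another \cite{GavassinoSuperluminal2021}, so the two hypotheses of Theorem~\ref{Theocaus} together yield covariant stability, and the inequality is then exactly the conclusion of Theorem~\ref{theo1}.

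Concretely, I would argue by contradiction, reusing the boost from the proof of Theorem~\ref{theo1}. Assume $\mathfrak{Im}\,\omega > |\mathfrak{Im}\,k|$. Then the Lorentz boost \eqref{Boost} with the subluminal velocity $v=\mathfrak{Im}\,k/\mathfrak{Im}\,\omega$ carries the given plane wave into a Fourier mode of the new frame $\{t',x' \}$ with $k' \in \mathbb{R}$ and $\mathfrak{Im}\,\omega' > 0$, i.e.\ into an exponentially growing Fourier mode, so the theory is unstable in $\{t',x' \}$. It remains to show that causality forces this instability back onto the frame $\{t,x \}$, contradicting the hypothesis.

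This last step is the heart of the matter. Since the dispersion relation of a linear constant-coefficient system is an analytic (algebraic) function of the wavenumber away from isolated branch points, $\mathfrak{Im}\,\omega'(k')>0$ persists on a whole interval of real $k'$; superposing the corresponding modes against a smooth amplitude supported there produces a solution $\Psi'$ that is a Schwartz function of $x'$ on the slice $t'=0$ while its $L^2$ norm in $x'$ grows exponentially as $t'\to+\infty$. Now invoke causality, in the form ``disturbances propagate no faster than light, so the domain of dependence of any event lies within its past light cone''. Because the hyperplane $t=0$ is spacelike, the value of $\Psi'$ at any of its points is determined by the data of $\Psi'$ on $t'=0$ over only a \emph{bounded} interval of $x'$; hence $\Psi'$ restricted to $t=0$ is again localized, and is therefore an admissible initial datum for the Cauchy problem in the frame $\{t,x \}$. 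Its evolution to the future of that slice is unbounded, since $\Psi'$ already grows without bound as $t'\to+\infty$ and the region $\{t>0\}$ reaches arbitrarily large values of $t'$ (again because $t=0$ is spacelike). This is a localized datum with unbounded evolution, i.e.\ an instability in $\{t,x \}$: contradiction. The mechanism is the one of \cite{GavassinoSuperluminal2021, Hiscock_Insatibility_first_order}, and it is the sole point where causality enters; note its kinship with the tempered-distribution hypothesis used in the correlator-based derivation of \cite{Heller2022} sketched in Appendix~\ref{AAA}.

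The step I expect to be the main obstacle is making this instability transfer airtight. It requires committing to precise definitions --- of causality (finite propagation speed, equivalently domain of dependence contained in the light cone, a manifestly Lorentz-invariant notion) and of linear stability (a growth bound on $\|\Psi(t,\cdot)\|$ for finite-energy data) --- and then checking that restriction to a boosted spacelike slice maps admissible initial data to admissible initial data and that the blow-up genuinely lies in the future of that slice. Further bookkeeping is needed when branches of the dispersion relation cross, and to absorb polynomial-in-$t'$ prefactors in the wave-packet estimate; none of this changes the conclusion, but it is what a fully rigorous proof must handle.
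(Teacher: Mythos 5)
Your high-level strategy (use causality to upgrade stability in $\{t,x\}$ to covariant stability, then invoke Theorem~\ref{theo1}) matches the paper's motivating discussion, but it is not the proof the paper gives, and as written it has a genuine gap at exactly the step you flagged: the transfer of the boosted instability back to the frame $\{t,x\}$. The weak link is the claim that, because each point of the spacelike slice $t=0$ has a bounded domain of dependence on $t'=0$, the restriction of your wave packet $\Psi'$ to $t=0$ is ``localized'' and hence admissible initial data. Determination by data on a bounded interval does not imply smallness. Quantitatively: the point $(t=0,x)$ corresponds to $(t',x')=(-v\gamma x,\gamma x)$, so for $x\to-\infty$ (with $v>0$) it sits at primed time $t'\simeq v\gamma|x|$, where the packet has been amplified by $e^{\sigma v\gamma|x|}$; meanwhile your packet, built from an amplitude $a(k')$ that is smooth and compactly supported in $k'$, has spatial tails that decay faster than any polynomial but \emph{not} exponentially (Paley--Wiener), so the exponential growth in elapsed primed time generically wins. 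Indeed, the explicit restriction is $\Psi'|_{t=0}(x)=\int a(k')\,e^{i\gamma\left(k'+v\,\omega'(k')\right)x}\,dk'$, whose integrand has modulus $e^{\gamma v\,\mathfrak{Im}\,\omega'(k')\,|x|}$ as $x\to-\infty$, with only oscillatory cancellation available; generically this restriction blows up along the slice. So you have not produced finite-norm data in $\{t,x\}$, and no contradiction with stability in that frame follows. A repair would require spatially compactly supported data (then finite propagation speed makes the solution vanish where the spacelike slice exits the light cone of the support, since the slice recedes at speed $1/v>1$), but then you must separately prove that compactly supported data still exhibits the exponential growth, which needs additional well-posedness/spectral input beyond what you assume.

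The paper avoids all of this by never boosting and never transferring an instability between frames. It handles $\mathfrak{Im}\,k=0$ directly by stability, and otherwise orients the axis so $\mathfrak{Im}\,k>0$, replaces the initial data $\Psi(0,x)$ by the one-sided cutoff $\Psi^\star(0,x)=\Theta^\star(x)\Psi(0,x)$ (smooth, exponentially decaying for $x\to+\infty$, vanishing for $x\leq-1$, hence of finite norm in the \emph{given} frame), and uses causality only once: since the data agree for $x\geq0$, the solutions agree on the wedge $x\geq t\geq0$. Along the null rays $x=t+\epsilon$ one then reads off $|\Psi^\star(t,t+\epsilon)|\propto e^{(\mathfrak{Im}\,\omega-\mathfrak{Im}\,k)t}$, and stability in the original frame alone forces $\mathfrak{Im}\,\omega\leq\mathfrak{Im}\,k$. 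That construction sidesteps the wave-packet localization, branch-tracking, and slice-restriction issues on which your argument currently founders; if you want to keep your route, you should either import the frame-transfer result of \cite{GavassinoSuperluminal2021} as a black box (making your proof a corollary of it together with Theorem~\ref{theo1}) or rebuild the transfer step with compactly supported data and the extra estimates noted above.
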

\begin{proof}
If $\mathfrak{Im} \, k=0$, then $\Psi(t,x)$ is a Fourier mode, and stability directly demands that $\mathfrak{Im}\, \omega \leq 0= |\mathfrak{Im} \, k|$. Thus, let us focus on the case with $\mathfrak{Im} \, k \neq 0$. For convenience, we orient the $x$-axis (by means of an $xy$-rotation) in such a way that $\mathfrak{Im} \, k > 0$, and we construct a new solution, $\Psi^\star(t,x)$, with the following initial data (see figure \ref{fig:PsiStar}):
\begin{figure}
\begin{center}
\includegraphics[width=0.45\textwidth]{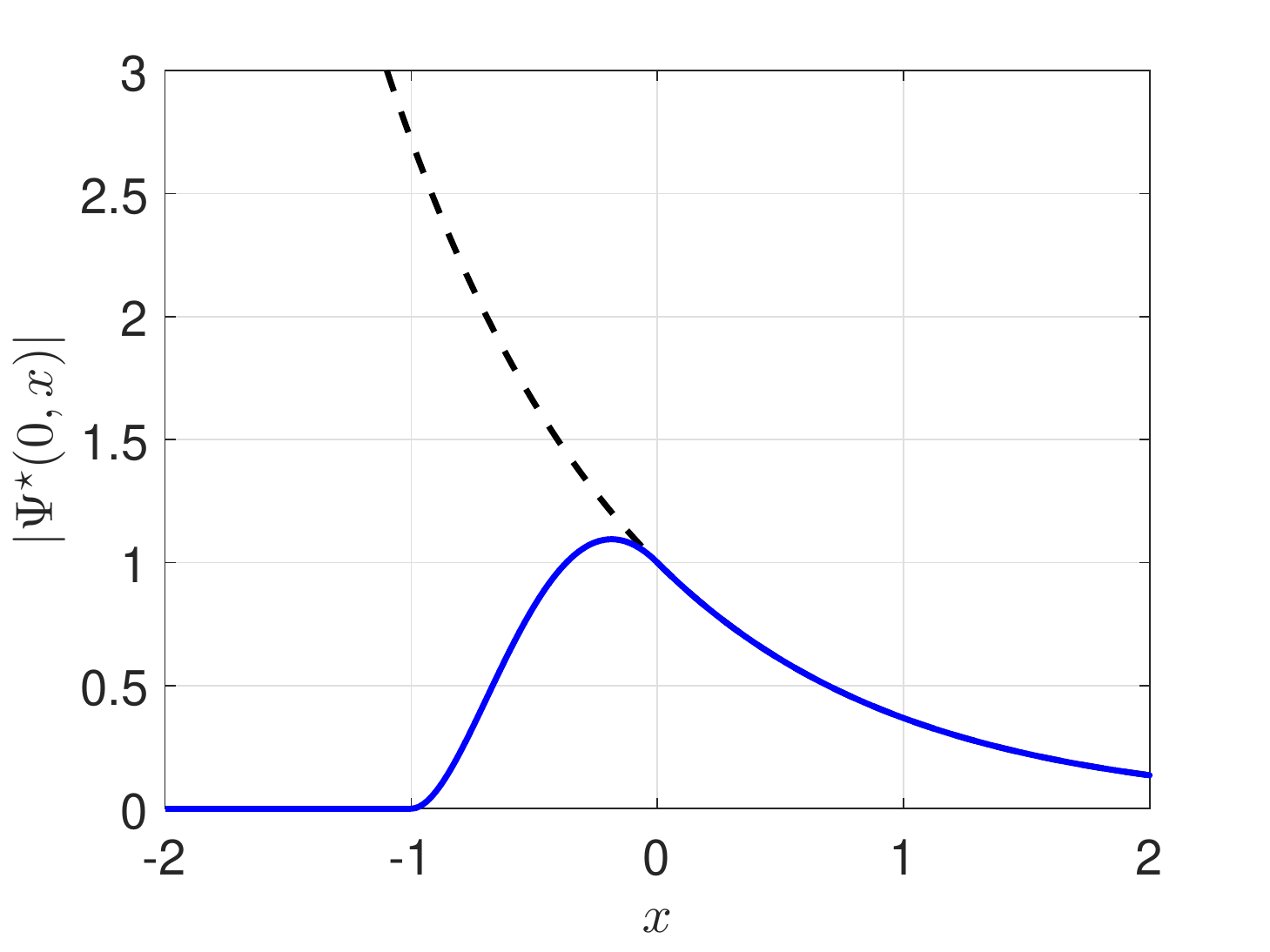}
\includegraphics[width=0.43\textwidth]{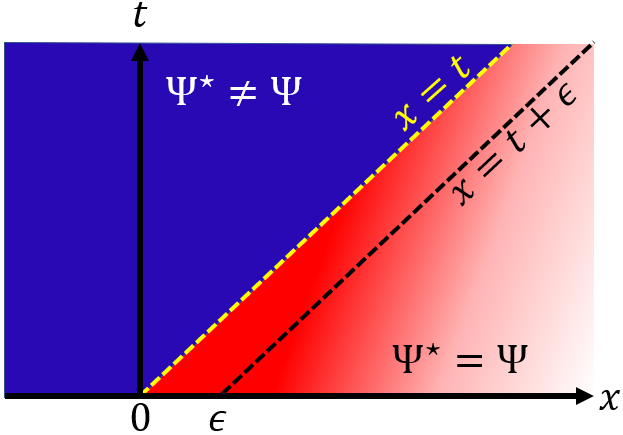}
	\caption{Upper panel: Initial data for $|\Psi^\star|$ (blue line) compared to the initial data for $|\Psi|$ (dashed line), in natural units. Both functions are smooth, and they agree for positive $x$, but $\Psi^\star$ has been constructed to vanish for $x\leq -1$. Lower panel: Minkowski diagram of $\Psi^\star(t,x)$. In the blue region, $\Psi^\star$ may differ from $\Psi$, and its exact value depends on the details of the fluid equations. In the red region, $\Psi^\star$ agrees with $\Psi$ (by causality), and the shades of red are a qualitative sketch of the intensity of $|\Psi^\star|$ (red$\,=\,$large, white$\,=\,$small). By stability, we know that $\Psi^\star$ cannot diverge along the null paths $x{=}t{+}\epsilon$ (with $\epsilon \geq 0$), and this produces the inequality $\mathfrak{Im}\, \omega \leq |\mathfrak{Im} \, k| $.}
	\label{fig:PsiStar}
	\end{center}
\end{figure}
\begin{equation}\label{thetuzzo}
\Psi^\star(0,x)=\Theta^\star(x)\Psi(0,x) .
\end{equation}
Here, $\Theta^\star \geq 0$ is an arbitrary smooth function that equals $1$ for $x \geq 0$ and equals $0$ for $x \leq -1$.
The initial profile of $\Psi^\star$ is smooth, and it has a well defined Fourier transform. In particular, $|\Psi^\star(0,x)|\propto \Theta^\star(x)e^{- (\mathfrak{Im} \, k) x}$, with $\mathfrak{Im} \, k > 0$. Stability implies that $\Psi^\star(t,x)$ cannot diverge for large $t$, because $\Psi^\star(0,x)$ is a smooth perturbation of finite square integral norm \cite{Hiscock_Insatibility_first_order}. On the other hand, since the initial data of $\Psi^\star$ agrees with the initial data of $\Psi$ for $x \geq 0$, and the theory is causal, we necessarily have that $\Psi^\star=\Psi$ for $x\geq t \geq 0$ \cite{GavassinoSuperluminal2021,Susskind1969,Hawking1973,Wald} (see figure \ref{fig:PsiStar}, lower panel). In particular, for any $\epsilon \geq 0$ and $t \geq 0$, we have that
\begin{equation}
|\Psi^\star(t,t{+}\epsilon)|=|\Psi(t,t{+}\epsilon)|\propto e^{(\mathfrak{Im}\, \omega-\mathfrak{Im} \, k)t}.
\end{equation}
The only way to avoid a divergence at large $t$ is to require that $\mathfrak{Im}\, \omega \leq \mathfrak{Im} \, k$, which is what we wanted to prove.
\end{proof}

Let us see a couple of examples.

\textit{Example 1 -} Consider the Israel-Stewart theory for bulk viscosity \cite{Israel_Stewart_1979,Hishcock1983,Salmonson1991,
BulkGavassino}. In the linear regime, the characteristic velocity is given by \cite{Causality_bulk}
\begin{equation}
c_{\text{ch}}^2 = c_s^2 + \dfrac{\zeta}{(\rho+P)\tau}  ,
\end{equation}
where $c_s$ is the infrared speed of sound, $\zeta$ is the bulk viscosity coefficient, $\tau$ is the bulk relaxation time, and $\rho+P$ is the enthalpy density. Stability in the rest frame implies that $c_s^2,(\rho+P),(\tau/\zeta)>0$ \cite{Hishcock1983,GavassinoGibbs2021}. Hence, $0 < c_s^2 \leq c_{\text{ch}}^2$. But causality demands $c_{\text{ch}}^2 \leq 1$, so that we necessarily have $0 < c_s^2 \leq 1$, and bound (iii) of the introduction is indeed recovered.

\textit{Example 2 -} Consider the Cattaneo model \cite{cattaneo1958,rezzolla_book,
GavassinoCausality2021}:
\begin{equation}\label{Cattaneuz}
\dfrac{\partial^2_t \Psi}{w^2}+\dfrac{\partial_t \Psi}{D} = \partial^2_x \Psi \, .
\end{equation}
Stability in the rest frame implies that $w^2>0$ and $D>0$, and causality requires $w^2 \leq 1$ ($w$ is the characteristic speed). The dispersion relations associated to \eqref{Cattaneuz} are 
\begin{equation}\label{cattuzdisp}
\omega_\pm(k) =- \dfrac{iw^2}{2D} \bigg[ 1 \pm  \sqrt{1-4k^2 \dfrac{D^2}{w^2} } \bigg] .
\end{equation}
The dispersion relation $\omega_-(k)$ is a diffusive mode, with diffusion coefficient $D$. Its radius of convergence around $k=0$ is $K=w/(2D)$. Hence, rest frame stability and causality imply that $0 < D \leq 1/(2K)$, in agreement with the bound (iv), and with the discussion of \cite{Heller2022}.

\section{Three quick applications}

We conclude the article by discussing some further consequences of equation \eqref{Fundam}, and of our theorems.

\subsection{Superluminal sound waves are unstable}

The authors of \cite{Heller2022} showed that bound (iii), i.e. $c_s \leq 1$, follows directly from \eqref{Fundam}. In Theorem \ref{theo1}, we have shown that \eqref{Fundam} is a necessary condition for covariant stability. It follows that any system with $c_s>1$ is necessarily unstable, independently from whether it is dissipative or not, and from whether it is causal or not. In fact, suppose that, for small $k \in \mathbb{C}$, we can express $\omega(k)$ as follows:
\begin{equation}\label{quq}
\omega(k) \approx c_s k  \, .
\end{equation}
A dispersion relation of the form \eqref{quq}, with $k$ \textit{complex}, exists for any partial differential equation that reduces to $(\partial^2_t -c_s^2 \partial^2_x)\Psi \approx 0$ for small gradients.
Then, setting $k=i \epsilon$ in \eqref{quq}, with $\epsilon >0$, we obtain solutions of the form $\Psi(t,x)=\Psi_0 \, e^{-\epsilon (x-c_s t)}$. If $c_s>1$, we can perform the boost \eqref{Boost}, with velocity $v=c_s^{-1}$. In the new reference frame, we have
$\Psi'(t',x')= \Psi_0' \, e^{\epsilon \, c_s t'/\gamma}$, which is a growing Fourier mode, signalling an instability.

It is often said that, in relativistic fluids, the adiabatic speed of sound $c_s=\partial P/\partial \rho$ (at constant specific entropy) must be subluminal, to ensure causality. This bound was criticised in \cite{FoxKuper1970,Caporaso1979}, because in dispersive media the signal propagation is not determined by $c_s$. However, it was repeatedly noticed \cite{Bludman1970,Krotscheck1978,Adams2006,Camelio2022} that any attempt to construct a system with $c_s>1$ that is both causal and stable is doomed to fail. Now we have provided an intuitive and general proof that $c_s \leq 1$ is a necessary condition for \textit{covariant stability}, rather than for causality.

\subsection{Luminal sound waves cannot be dissipated}

The above analysis cannot be used to rule out the limiting case $c_s =1$. Indeed, \citet{Zeldovich1961} has shown that some fluids can get very close to this limit. We can prove, however, that a fluid whose speed of sound is exactly $1$ cannot be viscous. Consider a sound-type dispersion relation which, for small $k \in \mathbb{C}$, can be expanded as
\begin{equation}\label{ksquared}
\omega(k)=k-iDk^2 +\mathcal{O}(k^3) \, ,
\end{equation}
where $D \in \mathbb{R}$. Stability in the rest frame requires $D \geq 0$. Furthermore, if we set $k=i \epsilon$ (with $\epsilon > 0$), equation \eqref{ksquared} becomes $\omega= i(\epsilon+D\epsilon^2)+\mathcal{O}(\epsilon^3)$. In the limit of small $\epsilon$, the constraint \eqref{Fundam} is valid only if $D$ vanishes identically. But for a simple fluid, $D$ is related to the bulk and shear viscosity coefficients, $\zeta$ and $\eta$, through the relation \cite{Kovtun2019} 
\begin{equation}
D = \dfrac{\zeta +4\eta/3}{2(\rho+P)} \, .
\end{equation}
Since both $\zeta$ and $\eta$ are non-negative (by stability), the only way to have $D=0$ is by setting $\zeta=\eta=0$.

\subsection{A new understanding of covariant stability}

Theorem \ref{theo1} can be reformulated in a manifestly covariant form by saying that, if a theory is covariantly stable, then all complex exponential solutions of the form $\Psi(x^\mu) {=} \Psi_0 e^{ip_\mu x^\mu}$ (where $\Psi_0 {\neq} 0$ and $p_\mu {\in} \mathbb{C}^4$ are constant) must be such that $\mathfrak{Im}\, p^\mu$ is \textit{not} timelike future directed. This is a novel and intuitive way of understanding  the very notion of ``covariant stability'' from a geometric perspective. In fact, suppose that $\mathfrak{Im}\, p^\mu$ is timelike future directed. Then, we have that $\mathfrak{Im}\, p^\mu= \Gamma u^\mu$, where $u^\mu$ is a four-velocity, and $\Gamma>0$. It then follows that
\begin{equation}
|\Psi(x^\mu)| \propto e^{-(\mathfrak{Im}\, p_\mu)x^\mu}= e^{-\Gamma u_\mu x^\mu}  \, .
\end{equation} 
Hence, $\Psi$ is a growing Fourier mode in the reference frame of an observer in motion with four-velocity $u^\mu$.

Let's consider a simple example. The diffusion equation $\partial_t \Psi=D \partial^2_x \Psi$ admits solutions with $p^\mu{=}(iD\epsilon^2,i\epsilon)$, for arbitrary $\epsilon \in \mathbb{R}$. When $D\epsilon \,{>} \,1$, the vector $\mathfrak{Im}\, p^\mu$ is timelike future directed, and an observer who moves with velocity $v=(D\epsilon)^{-1}$ detects a growing Fourier mode, with growth rate $\Gamma= D\epsilon^2/\gamma$. This shows, once again \cite{Kost2000,GavassinoLyapunov_2020,
GavassinoFronntiers2021}, that the boosted diffusion equation is unstable.

\section{Conclusions}

We have proved that any linear system of partial differential equations which describes the dynamics of small fluid perturbations around equilibrium must satisfy \eqref{Fundam} as a necessary (but not sufficient\footnote{A counter-example is the differential equation $\partial^2_t \Psi=0$. It is unstable, because it admits growing solutions of the form $\Psi(t,x)=f(x)t$, with $f \in L^2(\mathbb{R})$. On the other hand, its only dispersion relation is $\omega(k)=0$, which is consistent with \eqref{Fundam}.}) condition for hydrodynamic stability, see figure \ref{fig:dotted}. This provides an alternative way of understanding the domain of applicability of the bounds found by Heller, Serantes, Spalinski, and Withers \cite{Heller2022}. These bounds are, indeed, a universal property of near-equilibrium dynamics in special relativity, and they hold both in a classical and in quantum field theory.

The connection of these bounds with causality is a manifestation of the general mechanism by which covariant stability is possible only for causal systems \cite{GavassinoSuperluminal2021}. In fact, we were able to prove a theorem, according to which, if a system is stable in one reference frame, and causal, then \eqref{Fundam} automatically follows.

The existence of a connection between causality (or stability) and the values of the transport coefficients has been known for decades \cite{Hishcock1983,OlsonLifsh1990,DenicolKodama,Romatschke2010,Pu2010,
BemficaDNDefinitivo2020,
GavassinoStabilityCarter2022}. The real novelty of the bounds (i-iv) is that they are not ``theory-specific'', i.e. they must be respected by any stable relativistic hydrodynamic theory, no matter how complicated.

\begin{figure}[h!]
\begin{center}
\includegraphics[width=0.43\textwidth]{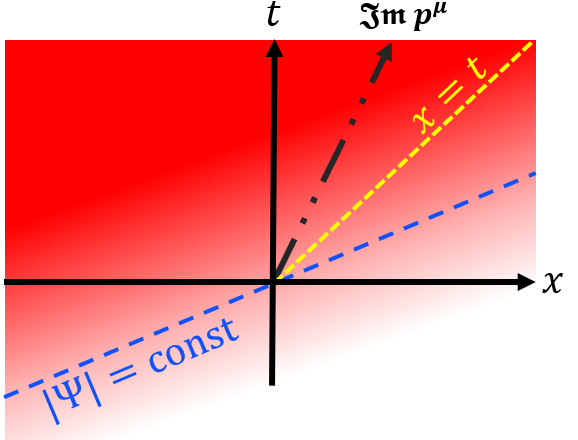}
	\caption{The intensity $|\Psi|\propto e^{(\mathfrak{Im} \,\omega)t-(\mathfrak{Im} \, k)x}$ of a perturbation is constant along the hypersurface $t=(\mathfrak{Im}\, k/\mathfrak{Im}\, \omega) \, x$.
If the inequality \eqref{Fundam} is violated (i.e. if $\mathfrak{Im}\, \omega > |\mathfrak{Im}\, k|$), then such hypersurface is spacelike (blue dashed line), and an observer whose four-velocity is collinear to $\mathfrak{Im} \, p^\mu=(\mathfrak{Im}\, \omega,\mathfrak{Im}\, k)$ detects an unstable Fourier mode \cite{Hiscock_Insatibility_first_order}, namely a sinusoidal perturbation whose amplitude is uniform in space and grows exponentially in time (shades of red).} 
	\label{fig:dotted}
	\end{center}
\end{figure}

\section*{Acknowledgements}

This work was supported by a Vanderbilt's Seeding Success Grant. I would like to thank M. Disconzi and J. Noronha for fruitful discussions. I am also grateful to  M. Heller, A. Serantes, M. Spali\'{n}ski, B. Withers, and P. Romatschke for reading the manuscript and providing useful comments.

\appendix

\section{The HSSW derivation}\label{AAA}

Despite formally the HSSW argument for \eqref{Fundam} relies on the study of the quantum retarded correlator \eqref{correlator}, the main idea is very general, and it can be naturally adapted to relativistic hydrodynamics (and more generally to classical field theory) by replacing $G^R$ with the fundamental solution of the system of partial differential equations. Here, we briefly outline how one can systematically apply the HSSW reasoning in this context.

\subsection{Applying the argument to hydrodynamics}

Consider a single\footnote{It is well-known that, from any system of linear differential equations with
constant coefficients, a single linear differential equation with constant coefficients can be obtained for any of the unknown functions \cite{CourantHilbert2_book}.} partial differential equation with constant coefficients of the form $\mathcal{L}(\partial_\mu)\varphi=0$, where $\varphi(x^\mu)$ is a scalar field, and $\mathcal{L}(\partial_\mu)$ is a linear differential operator. A field configuration of the form $\varphi=e^{ip_\mu x^\mu}$, with constant $p^\mu \in \mathbb{C}^4$, is a solution to the partial differential equation if and only if
\begin{equation}
\mathcal{L}(ip_\mu)=0.
\end{equation}
This defines the (complex) dispersion relations of the system. The fundamental solution, $G(x^\mu)$, of the partial differential equation is the solution to the Cauchy problem
\begin{equation}\label{gaga}
\left\{ \begin{aligned} 
  & \mathcal{L}(\partial_\mu)G = \delta^4(x^\mu)  , \\
  & G(t<0) = 0 .
\end{aligned} \right.
\end{equation}
Now, we make two basic assumptions:
\begin{itemize}
\item[(a)] $G$ exists, and it is a tempered distribution. This is similar to requiring that the fluid model is stable. In fact, for unstable systems, $G$ usually grows exponentially in time, and it is not expected to be a tempered distribution.
\item[(b)] The support of $G$ does not exit the future lightcone. This is automatically true if the system is causal. 
\end{itemize}
Then, the function
\begin{equation}\label{GGG}
\tilde{G}(p^\mu)= \int G(x^\mu)e^{-ip_\mu x^\mu} d^4 x
\end{equation}
exists and is finite whenever $\mathfrak{Im} \, p^\mu$ lays inside the open future lightcone. In fact, $G$ is a tempered distribution, and $\big|e^{-ip_\mu x^\mu}\big|=e^{(\mathfrak{Im} \, p_\mu) x^\mu}$ decays exponentially over the support of the integrand (because $\mathfrak{Im} \, p^\mu$ is future directed timelike, and $x^\mu$ is future directed non-spacelike), so that it plays the role of a Schwartz test function. On the other hand, if we multiply the first equation of \eqref{gaga} by $e^{-ip_\mu x^\mu}$, and we integrate over the spacetime, we obtain
\begin{equation}\label{LLL}
\mathcal{L}(ip_\mu)\tilde{G}=1 .
\end{equation}
Now, suppose that there is a solution to $\mathcal{L}(\partial_\mu)\varphi=0$ of the form $\varphi=e^{ip_\mu x^\mu}$ where $\mathfrak{Im} \, p^\mu$ lays inside the open future lightcone. Then, $\mathcal{L}(ip_\mu)$ vanishes for this choice of $p_\mu$. But this contradicts equation \eqref{LLL}, since we know that $\tilde{G}(p_\mu)$ must be finite (being the action of a tempered distribution on a Schwartz function). We conclude that all the dispersion relations of the theory are such that $\mathfrak{Im} \, p^\mu$ sits outside the open future lighcone. Setting $\mathfrak{Im} \, p^\mu=(\mathfrak{Im} \, \omega,\mathfrak{Im} \, k,0,0)$, we finally obtain equation \eqref{Fundam}.

\subsection{Comment on temperedness and stability}

Despite the derivation of \eqref{Fundam} by HSSW outlined above is remarkably simple and general, it is not immediate to translate assumption (a) into a simple set of mathematical conditions on the differential equation $\mathcal{L}(\partial_\mu)\varphi=0$. As already noted by \cite{Heller2022}, requiring that $G$ is tempered is ``similar'' to requiring stability. However, the two concepts are not equivalent, since there are unstable systems whose Green function is tempered. 

For example, the equation\footnote{An example of a system that obeys \eqref{ehibarista} is the pressureless perfect fluid (the ``dust'' \cite{rezzolla_book}), whose linearised fluid equations (in natural units) are $\partial_t \rho + \partial_x u=0$, $\partial_t u=0$, where $\rho$ is the energy density and $u$ is the flow velocity. Taking the time derivative of the first equation, and invoking the second, we indeed obtain $\partial^2_t \rho=0$. The general solution to the linearised dust dynamics is manifestly unstable: $u(t,x)=f(x)$, $\rho(t,x)=g(x)-f'(x)t$.} 
\begin{equation}\label{ehibarista}
\partial^2_t \varphi=0
\end{equation} 
is unstable, as it admits growing solutions of the form $\varphi(t,x)=f(x) t$ (where $f$ can be taken square integrable), but its retarded Green function is $G(t,x)=\delta(x)\Theta(t) t$, which is tempered, as it grows linearly in $t$. Furthermore, there also exist tempered distributions that grow exponentially in time, e.g. $e^t \cos(e^t)$, which is the distributional derivative of $\sin(e^t)$. Hence, the mathematical connection between hydrodynamic stability and assumption (a) is not straightforward, and it still needs to be established rigorously.

%
%

\bibliography{Biblio}

\begin{thebibliography}{37}%
\makeatletter
\providecommand \@ifxundefined [1]{%
 \@ifx{#1\undefined}
}%
\providecommand \@ifnum [1]{%
 \ifnum #1\expandafter \@firstoftwo
 \else \expandafter \@secondoftwo
 \fi
}%
\providecommand \@ifx [1]{%
 \ifx #1\expandafter \@firstoftwo
 \else \expandafter \@secondoftwo
 \fi
}%
\providecommand \natexlab [1]{#1}%
\providecommand \enquote  [1]{``#1''}%
\providecommand \bibnamefont  [1]{#1}%
\providecommand \bibfnamefont [1]{#1}%
\providecommand \citenamefont [1]{#1}%
\providecommand \href@noop [0]{\@secondoftwo}%
\providecommand \href [0]{\begingroup \@sanitize@url \@href}%
\providecommand \@href[1]{\@@startlink{#1}\@@href}%
\providecommand \@@href[1]{\endgroup#1\@@endlink}%
\providecommand \@sanitize@url [0]{\catcode `\\12\catcode `\$12\catcode
  `\&12\catcode `\#12\catcode `\^12\catcode `\_12\catcode `\%12\relax}%
\providecommand \@@startlink[1]{}%
\providecommand \@@endlink[0]{}%
\providecommand \url  [0]{\begingroup\@sanitize@url \@url }%
\providecommand \@url [1]{\endgroup\@href {#1}{\urlprefix }}%
\providecommand \urlprefix  [0]{URL }%
\providecommand \Eprint [0]{\href }%
\providecommand \doibase [0]{http://dx.doi.org/}%
\providecommand \selectlanguage [0]{\@gobble}%
\providecommand \bibinfo  [0]{\@secondoftwo}%
\providecommand \bibfield  [0]{\@secondoftwo}%
\providecommand \translation [1]{[#1]}%
\providecommand \BibitemOpen [0]{}%
\providecommand \bibitemStop [0]{}%
\providecommand \bibitemNoStop [0]{.\EOS\space}%
\providecommand \EOS [0]{\spacefactor3000\relax}%
\providecommand \BibitemShut  [1]{\csname bibitem#1\endcsname}%
\let\auto@bib@innerbib\@empty
\bibitem [{\citenamefont {{Florkowski}}\ \emph {et~al.}(2018)\citenamefont
  {{Florkowski}}, \citenamefont {{Heller}},\ and\ \citenamefont
  {{Spali{\'n}ski}}}]{FlorkowskiReview2018}%
  \BibitemOpen
  \bibfield  {author} {\bibinfo {author} {\bibfnamefont {W.}~\bibnamefont
  {{Florkowski}}}, \bibinfo {author} {\bibfnamefont {M.~P.}\ \bibnamefont
  {{Heller}}}, \ and\ \bibinfo {author} {\bibfnamefont {M.}~\bibnamefont
  {{Spali{\'n}ski}}},\ }\href {\doibase 10.1088/1361-6633/aaa091} {\bibfield
  {journal} {\bibinfo  {journal} {Reports on Progress in Physics}\ }\textbf
  {\bibinfo {volume} {81}},\ \bibinfo {eid} {046001} (\bibinfo {year}
  {2018})},\ \Eprint {http://arxiv.org/abs/1707.02282} {arXiv:1707.02282
  [hep-ph]} \BibitemShut {NoStop}%
\bibitem [{\citenamefont {{Heller}}\ \emph {et~al.}(2022)\citenamefont
  {{Heller}}, \citenamefont {{Serantes}}, \citenamefont {{Spali{\'n}ski}},\
  and\ \citenamefont {{Withers}}}]{Heller2022}%
  \BibitemOpen
  \bibfield  {author} {\bibinfo {author} {\bibfnamefont {M.~P.}\ \bibnamefont
  {{Heller}}}, \bibinfo {author} {\bibfnamefont {A.}~\bibnamefont
  {{Serantes}}}, \bibinfo {author} {\bibfnamefont {M.}~\bibnamefont
  {{Spali{\'n}ski}}}, \ and\ \bibinfo {author} {\bibfnamefont {B.}~\bibnamefont
  {{Withers}}},\ }\href@noop {} {\bibfield  {journal} {\bibinfo  {journal}
  {arXiv e-prints}\ ,\ \bibinfo {eid} {arXiv:2212.07434}} (\bibinfo {year}
  {2022})},\ \Eprint {http://arxiv.org/abs/2212.07434} {arXiv:2212.07434
  [hep-th]} \BibitemShut {NoStop}%
\bibitem [{\citenamefont {Peskin}\ and\ \citenamefont
  {Schroeder}(1995)}]{Peskin_book}%
  \BibitemOpen
  \bibfield  {author} {\bibinfo {author} {\bibfnamefont {M.~E.}\ \bibnamefont
  {Peskin}}\ and\ \bibinfo {author} {\bibfnamefont {D.~V.}\ \bibnamefont
  {Schroeder}},\ }\href {http://www.slac.stanford.edu/~mpeskin/QFT.html} {\emph
  {\bibinfo {title} {{An introduction to quantum field theory}}}}\ (\bibinfo
  {publisher} {Addison-Wesley},\ \bibinfo {address} {Reading, USA},\ \bibinfo
  {year} {1995})\BibitemShut {NoStop}%
\bibitem [{\citenamefont {Hiscock}\ and\ \citenamefont
  {Lindblom}(1985)}]{Hiscock_Insatibility_first_order}%
  \BibitemOpen
  \bibfield  {author} {\bibinfo {author} {\bibfnamefont {W.}~\bibnamefont
  {Hiscock}}\ and\ \bibinfo {author} {\bibfnamefont {L.}~\bibnamefont
  {Lindblom}},\ }\href {\doibase 10.1103/PhysRevD.31.725} {\bibfield  {journal}
  {\bibinfo  {journal} {Physical review D: Particles and fields}\ }\textbf
  {\bibinfo {volume} {31}},\ \bibinfo {pages} {725} (\bibinfo {year}
  {1985})}\BibitemShut {NoStop}%
\bibitem [{\citenamefont {Bludman}\ and\ \citenamefont
  {Ruderman}(1968)}]{Bludman1968}%
  \BibitemOpen
  \bibfield  {author} {\bibinfo {author} {\bibfnamefont {S.~A.}\ \bibnamefont
  {Bludman}}\ and\ \bibinfo {author} {\bibfnamefont {M.~A.}\ \bibnamefont
  {Ruderman}},\ }\href {\doibase 10.1103/PhysRev.170.1176} {\bibfield
  {journal} {\bibinfo  {journal} {Phys. Rev.}\ }\textbf {\bibinfo {volume}
  {170}},\ \bibinfo {pages} {1176} (\bibinfo {year} {1968})}\BibitemShut
  {NoStop}%
\bibitem [{\citenamefont {{Fox}}\ \emph {et~al.}(1970)\citenamefont {{Fox}},
  \citenamefont {{Kuper}},\ and\ \citenamefont {{Lipson}}}]{FoxKuper1970}%
  \BibitemOpen
  \bibfield  {author} {\bibinfo {author} {\bibfnamefont {R.}~\bibnamefont
  {{Fox}}}, \bibinfo {author} {\bibfnamefont {C.~G.}\ \bibnamefont {{Kuper}}},
  \ and\ \bibinfo {author} {\bibfnamefont {S.~G.}\ \bibnamefont {{Lipson}}},\
  }\href {\doibase 10.1098/rspa.1970.0093} {\bibfield  {journal} {\bibinfo
  {journal} {Proceedings of the Royal Society of London Series A}\ }\textbf
  {\bibinfo {volume} {316}},\ \bibinfo {pages} {515} (\bibinfo {year}
  {1970})}\BibitemShut {NoStop}%
\bibitem [{\citenamefont {Courant}\ and\ \citenamefont
  {Hilbert}(1989)}]{CourantHilbert2_book}%
  \BibitemOpen
  \bibfield  {author} {\bibinfo {author} {\bibfnamefont {R.}~\bibnamefont
  {Courant}}\ and\ \bibinfo {author} {\bibfnamefont {D.}~\bibnamefont
  {Hilbert}},\ }\href@noop {} {\emph {\bibinfo {title} {{Methods of
  Mathematical Physics, Vol 2: Partial Differential Equations}}}}\ (\bibinfo
  {publisher} {John Wiley and Sons},\ \bibinfo {address} {New York, NY},\
  \bibinfo {year} {1989})\BibitemShut {NoStop}%
\bibitem [{\citenamefont {Rauch}(1991)}]{Rauch_book}%
  \BibitemOpen
  \bibfield  {author} {\bibinfo {author} {\bibfnamefont {J.}~\bibnamefont
  {Rauch}},\ }\href {https://doi.org/10.1007/978-1-4612-0953-9} {\emph
  {\bibinfo {title} {{Partial Differential Equations}}}},\ Graduate Texts in
  Mathematics\ (\bibinfo  {publisher} {Springer},\ \bibinfo {address} {New
  York, NY},\ \bibinfo {year} {1991})\BibitemShut {NoStop}%
\bibitem [{\citenamefont
  {{Gavassino}}(2022{\natexlab{a}})}]{GavassinoSuperluminal2021}%
  \BibitemOpen
  \bibfield  {author} {\bibinfo {author} {\bibfnamefont {L.}~\bibnamefont
  {{Gavassino}}},\ }\href {\doibase 10.1103/PhysRevX.12.041001} {\bibfield
  {journal} {\bibinfo  {journal} {Physical Review X}\ }\textbf {\bibinfo
  {volume} {12}},\ \bibinfo {eid} {041001} (\bibinfo {year}
  {2022}{\natexlab{a}})},\ \Eprint {http://arxiv.org/abs/2111.05254}
  {arXiv:2111.05254 [gr-qc]} \BibitemShut {NoStop}%
\bibitem [{\citenamefont {Aharonov}\ \emph {et~al.}(1969)\citenamefont
  {Aharonov}, \citenamefont {Komar},\ and\ \citenamefont
  {Susskind}}]{Susskind1969}%
  \BibitemOpen
  \bibfield  {author} {\bibinfo {author} {\bibfnamefont {Y.}~\bibnamefont
  {Aharonov}}, \bibinfo {author} {\bibfnamefont {A.}~\bibnamefont {Komar}}, \
  and\ \bibinfo {author} {\bibfnamefont {L.}~\bibnamefont {Susskind}},\ }\href
  {\doibase 10.1103/PhysRev.182.1400} {\bibfield  {journal} {\bibinfo
  {journal} {Phys. Rev.}\ }\textbf {\bibinfo {volume} {182}},\ \bibinfo {pages}
  {1400} (\bibinfo {year} {1969})}\BibitemShut {NoStop}%
\bibitem [{\citenamefont {Hawking}\ and\ \citenamefont
  {Ellis}(2011)}]{Hawking1973}%
  \BibitemOpen
  \bibfield  {author} {\bibinfo {author} {\bibfnamefont {S.~W.}\ \bibnamefont
  {Hawking}}\ and\ \bibinfo {author} {\bibfnamefont {G.~F.~R.}\ \bibnamefont
  {Ellis}},\ }\href {\doibase 10.1017/CBO9780511524646} {\emph {\bibinfo
  {title} {{The Large Scale Structure of Space-Time}}}},\ Cambridge Monographs
  on Mathematical Physics\ (\bibinfo  {publisher} {Cambridge University
  Press},\ \bibinfo {year} {2011})\BibitemShut {NoStop}%
\bibitem [{\citenamefont {Wald}(1984)}]{Wald}%
  \BibitemOpen
  \bibfield  {author} {\bibinfo {author} {\bibfnamefont {R.~M.}\ \bibnamefont
  {Wald}},\ }\href {https://cds.cern.ch/record/106274} {\emph {\bibinfo {title}
  {{General relativity}}}}\ (\bibinfo  {publisher} {Chicago Univ. Press},\
  \bibinfo {address} {Chicago, IL},\ \bibinfo {year} {1984})\BibitemShut
  {NoStop}%
\bibitem [{\citenamefont {Israel}\ and\ \citenamefont
  {Stewart}(1979)}]{Israel_Stewart_1979}%
  \BibitemOpen
  \bibfield  {author} {\bibinfo {author} {\bibfnamefont {W.}~\bibnamefont
  {Israel}}\ and\ \bibinfo {author} {\bibfnamefont {J.}~\bibnamefont
  {Stewart}},\ }\href {\doibase https://doi.org/10.1016/0003-4916(79)90130-1}
  {\bibfield  {journal} {\bibinfo  {journal} {Annals of Physics}\ }\textbf
  {\bibinfo {volume} {118}},\ \bibinfo {pages} {341 } (\bibinfo {year}
  {1979})}\BibitemShut {NoStop}%
\bibitem [{\citenamefont {Hiscock}\ and\ \citenamefont
  {Lindblom}(1983)}]{Hishcock1983}%
  \BibitemOpen
  \bibfield  {author} {\bibinfo {author} {\bibfnamefont {W.~A.}\ \bibnamefont
  {Hiscock}}\ and\ \bibinfo {author} {\bibfnamefont {L.}~\bibnamefont
  {Lindblom}},\ }\href {\doibase https://doi.org/10.1016/0003-4916(83)90288-9}
  {\bibfield  {journal} {\bibinfo  {journal} {Annals of Physics}\ }\textbf
  {\bibinfo {volume} {151}},\ \bibinfo {pages} {466 } (\bibinfo {year}
  {1983})}\BibitemShut {NoStop}%
\bibitem [{\citenamefont {Hiscock}\ and\ \citenamefont
  {Salmonson}(1991)}]{Salmonson1991}%
  \BibitemOpen
  \bibfield  {author} {\bibinfo {author} {\bibfnamefont {W.~A.}\ \bibnamefont
  {Hiscock}}\ and\ \bibinfo {author} {\bibfnamefont {J.}~\bibnamefont
  {Salmonson}},\ }\href {\doibase 10.1103/PhysRevD.43.3249} {\bibfield
  {journal} {\bibinfo  {journal} {Phys. Rev. D}\ }\textbf {\bibinfo {volume}
  {43}},\ \bibinfo {pages} {3249} (\bibinfo {year} {1991})}\BibitemShut
  {NoStop}%
\bibitem [{\citenamefont {Gavassino}\ \emph {et~al.}(2021)\citenamefont
  {Gavassino}, \citenamefont {Antonelli},\ and\ \citenamefont
  {Haskell}}]{BulkGavassino}%
  \BibitemOpen
  \bibfield  {author} {\bibinfo {author} {\bibfnamefont {L.}~\bibnamefont
  {Gavassino}}, \bibinfo {author} {\bibfnamefont {M.}~\bibnamefont
  {Antonelli}}, \ and\ \bibinfo {author} {\bibfnamefont {B.}~\bibnamefont
  {Haskell}},\ }\href {\doibase 10.1088/1361-6382/abe588} {\bibfield  {journal}
  {\bibinfo  {journal} {Classical and Quantum Gravity}\ }\textbf {\bibinfo
  {volume} {38}},\ \bibinfo {pages} {075001} (\bibinfo {year}
  {2021})}\BibitemShut {NoStop}%
\bibitem [{\citenamefont {{Bemfica}}\ \emph {et~al.}(2019)\citenamefont
  {{Bemfica}}, \citenamefont {{Disconzi}},\ and\ \citenamefont
  {{Noronha}}}]{Causality_bulk}%
  \BibitemOpen
  \bibfield  {author} {\bibinfo {author} {\bibfnamefont {F.~S.}\ \bibnamefont
  {{Bemfica}}}, \bibinfo {author} {\bibfnamefont {M.~M.}\ \bibnamefont
  {{Disconzi}}}, \ and\ \bibinfo {author} {\bibfnamefont {J.}~\bibnamefont
  {{Noronha}}},\ }\href@noop {} {\bibfield  {journal} {\bibinfo  {journal}
  {arXiv e-prints}\ ,\ \bibinfo {eid} {arXiv:1901.06701}} (\bibinfo {year}
  {2019})},\ \Eprint {http://arxiv.org/abs/1901.06701} {arXiv:1901.06701
  [gr-qc]} \BibitemShut {NoStop}%
\bibitem [{\citenamefont {{Gavassino}}(2021)}]{GavassinoGibbs2021}%
  \BibitemOpen
  \bibfield  {author} {\bibinfo {author} {\bibfnamefont {L.}~\bibnamefont
  {{Gavassino}}},\ }\href {\doibase 10.1088/1361-6382/ac2b0e} {\bibfield
  {journal} {\bibinfo  {journal} {Classical and Quantum Gravity}\ }\textbf
  {\bibinfo {volume} {38}},\ \bibinfo {eid} {21LT02} (\bibinfo {year}
  {2021})},\ \Eprint {http://arxiv.org/abs/2104.09142} {arXiv:2104.09142
  [gr-qc]} \BibitemShut {NoStop}%
\bibitem [{\citenamefont {Cattaneo}(1958)}]{cattaneo1958}%
  \BibitemOpen
  \bibfield  {author} {\bibinfo {author} {\bibfnamefont {C.}~\bibnamefont
  {Cattaneo}},\ }\href {https://books.google.pl/books?id=mHGeQwAACAAJ} {\emph
  {\bibinfo {title} {Sur une forme de l'{\'e}quation de la chaleur
  {\'e}liminant le paradoxe d'une propagation instantan{\'e}e}}},\ Comptes
  rendus hebdomadaires des s{\'e}ances de l'Acad{\'e}mie des sciences\
  (\bibinfo  {publisher} {Gauthier-Villars},\ \bibinfo {year}
  {1958})\BibitemShut {NoStop}%
\bibitem [{\citenamefont {{Rezzolla}}\ and\ \citenamefont
  {{Zanotti}}(2013)}]{rezzolla_book}%
  \BibitemOpen
  \bibfield  {author} {\bibinfo {author} {\bibfnamefont {L.}~\bibnamefont
  {{Rezzolla}}}\ and\ \bibinfo {author} {\bibfnamefont {O.}~\bibnamefont
  {{Zanotti}}},\ }\href@noop {} {\emph {\bibinfo {title} {Relativistic
  Hydrodynamics, by L.~Rezzolla and O.~Zanotti.~Oxford University Press,
  2013.~ISBN-10: 0198528906; ISBN-13: 978-0198528906}}}\ (\bibinfo {year}
  {2013})\BibitemShut {NoStop}%
\bibitem [{\citenamefont {{Gavassino}}\ \emph {et~al.}(2022)\citenamefont
  {{Gavassino}}, \citenamefont {{Antonelli}},\ and\ \citenamefont
  {{Haskell}}}]{GavassinoCausality2021}%
  \BibitemOpen
  \bibfield  {author} {\bibinfo {author} {\bibfnamefont {L.}~\bibnamefont
  {{Gavassino}}}, \bibinfo {author} {\bibfnamefont {M.}~\bibnamefont
  {{Antonelli}}}, \ and\ \bibinfo {author} {\bibfnamefont {B.}~\bibnamefont
  {{Haskell}}},\ }\href {\doibase 10.1103/PhysRevLett.128.010606} {\bibfield
  {journal} {\bibinfo  {journal} {\prl}\ }\textbf {\bibinfo {volume} {128}},\
  \bibinfo {eid} {010606} (\bibinfo {year} {2022})},\ \Eprint
  {http://arxiv.org/abs/2105.14621} {arXiv:2105.14621 [gr-qc]} \BibitemShut
  {NoStop}%
\bibitem [{\citenamefont {Caporaso}\ and\ \citenamefont
  {Brecher}(1979)}]{Caporaso1979}%
  \BibitemOpen
  \bibfield  {author} {\bibinfo {author} {\bibfnamefont {G.}~\bibnamefont
  {Caporaso}}\ and\ \bibinfo {author} {\bibfnamefont {K.}~\bibnamefont
  {Brecher}},\ }\href {\doibase 10.1103/PhysRevD.20.1823} {\bibfield  {journal}
  {\bibinfo  {journal} {Phys. Rev. D}\ }\textbf {\bibinfo {volume} {20}},\
  \bibinfo {pages} {1823} (\bibinfo {year} {1979})}\BibitemShut {NoStop}%
\bibitem [{\citenamefont {Bludman}\ and\ \citenamefont
  {Ruderman}(1970)}]{Bludman1970}%
  \BibitemOpen
  \bibfield  {author} {\bibinfo {author} {\bibfnamefont {S.~A.}\ \bibnamefont
  {Bludman}}\ and\ \bibinfo {author} {\bibfnamefont {M.~A.}\ \bibnamefont
  {Ruderman}},\ }\href {\doibase 10.1103/PhysRevD.1.3243} {\bibfield  {journal}
  {\bibinfo  {journal} {Phys. Rev. D}\ }\textbf {\bibinfo {volume} {1}},\
  \bibinfo {pages} {3243} (\bibinfo {year} {1970})}\BibitemShut {NoStop}%
\bibitem [{\citenamefont {{Krotscheck}}\ and\ \citenamefont
  {{Kundt}}(1978)}]{Krotscheck1978}%
  \BibitemOpen
  \bibfield  {author} {\bibinfo {author} {\bibfnamefont {E.}~\bibnamefont
  {{Krotscheck}}}\ and\ \bibinfo {author} {\bibfnamefont {W.}~\bibnamefont
  {{Kundt}}},\ }\href {\doibase 10.1007/BF01609447} {\bibfield  {journal}
  {\bibinfo  {journal} {Communications in Mathematical Physics}\ }\textbf
  {\bibinfo {volume} {60}},\ \bibinfo {pages} {171} (\bibinfo {year}
  {1978})}\BibitemShut {NoStop}%
\bibitem [{\citenamefont {{Adams}}\ \emph {et~al.}(2006)\citenamefont
  {{Adams}}, \citenamefont {{Arkani-Hamed}}, \citenamefont {{Dubovsky}},
  \citenamefont {{Nicolis}},\ and\ \citenamefont {{Rattazzi}}}]{Adams2006}%
  \BibitemOpen
  \bibfield  {author} {\bibinfo {author} {\bibfnamefont {A.}~\bibnamefont
  {{Adams}}}, \bibinfo {author} {\bibfnamefont {N.}~\bibnamefont
  {{Arkani-Hamed}}}, \bibinfo {author} {\bibfnamefont {S.}~\bibnamefont
  {{Dubovsky}}}, \bibinfo {author} {\bibfnamefont {A.}~\bibnamefont
  {{Nicolis}}}, \ and\ \bibinfo {author} {\bibfnamefont {R.}~\bibnamefont
  {{Rattazzi}}},\ }\href {\doibase 10.1088/1126-6708/2006/10/014} {\bibfield
  {journal} {\bibinfo  {journal} {Journal of High Energy Physics}\ }\textbf
  {\bibinfo {volume} {2006}},\ \bibinfo {eid} {014} (\bibinfo {year} {2006})},\
  \Eprint {http://arxiv.org/abs/hep-th/0602178} {arXiv:hep-th/0602178 [hep-th]}
  \BibitemShut {NoStop}%
\bibitem [{\citenamefont {{Camelio}}\ \emph {et~al.}(2022)\citenamefont
  {{Camelio}}, \citenamefont {{Gavassino}}, \citenamefont {{Antonelli}},
  \citenamefont {{Bernuzzi}},\ and\ \citenamefont {{Haskell}}}]{Camelio2022}%
  \BibitemOpen
  \bibfield  {author} {\bibinfo {author} {\bibfnamefont {G.}~\bibnamefont
  {{Camelio}}}, \bibinfo {author} {\bibfnamefont {L.}~\bibnamefont
  {{Gavassino}}}, \bibinfo {author} {\bibfnamefont {M.}~\bibnamefont
  {{Antonelli}}}, \bibinfo {author} {\bibfnamefont {S.}~\bibnamefont
  {{Bernuzzi}}}, \ and\ \bibinfo {author} {\bibfnamefont {B.}~\bibnamefont
  {{Haskell}}},\ }\href@noop {} {\bibfield  {journal} {\bibinfo  {journal}
  {arXiv e-prints}\ ,\ \bibinfo {eid} {arXiv:2204.11809}} (\bibinfo {year}
  {2022})},\ \Eprint {http://arxiv.org/abs/2204.11809} {arXiv:2204.11809
  [gr-qc]} \BibitemShut {NoStop}%
\bibitem [{\citenamefont {Zel'dovich}(1961)}]{Zeldovich1961}%
  \BibitemOpen
  \bibfield  {author} {\bibinfo {author} {\bibfnamefont {Y.~B.}\ \bibnamefont
  {Zel'dovich}},\ }\href@noop {} {\bibfield  {journal} {\bibinfo  {journal}
  {Zh. Eksp. Teor. Fiz.}\ }\textbf {\bibinfo {volume} {41}},\ \bibinfo {pages}
  {1609} (\bibinfo {year} {1961})}\BibitemShut {NoStop}%
\bibitem [{\citenamefont {{Kovtun}}(2019)}]{Kovtun2019}%
  \BibitemOpen
  \bibfield  {author} {\bibinfo {author} {\bibfnamefont {P.}~\bibnamefont
  {{Kovtun}}},\ }\href {\doibase 10.1007/JHEP10(2019)034} {\bibfield  {journal}
  {\bibinfo  {journal} {Journal of High Energy Physics}\ }\textbf {\bibinfo
  {volume} {2019}},\ \bibinfo {eid} {34} (\bibinfo {year} {2019})},\ \Eprint
  {http://arxiv.org/abs/1907.08191} {arXiv:1907.08191 [hep-th]} \BibitemShut
  {NoStop}%
\bibitem [{\citenamefont {{Kost{\"a}dt}}\ and\ \citenamefont
  {{Liu}}(2000)}]{Kost2000}%
  \BibitemOpen
  \bibfield  {author} {\bibinfo {author} {\bibfnamefont {P.}~\bibnamefont
  {{Kost{\"a}dt}}}\ and\ \bibinfo {author} {\bibfnamefont {M.}~\bibnamefont
  {{Liu}}},\ }\href {\doibase 10.1103/PhysRevD.62.023003} {\bibfield  {journal}
  {\bibinfo  {journal} {\prd}\ }\textbf {\bibinfo {volume} {62}},\ \bibinfo
  {eid} {023003} (\bibinfo {year} {2000})},\ \Eprint
  {http://arxiv.org/abs/cond-mat/0010276} {arXiv:cond-mat/0010276
  [cond-mat.stat-mech]} \BibitemShut {NoStop}%
\bibitem [{\citenamefont {Gavassino}\ \emph {et~al.}(2020)\citenamefont
  {Gavassino}, \citenamefont {Antonelli},\ and\ \citenamefont
  {Haskell}}]{GavassinoLyapunov_2020}%
  \BibitemOpen
  \bibfield  {author} {\bibinfo {author} {\bibfnamefont {L.}~\bibnamefont
  {Gavassino}}, \bibinfo {author} {\bibfnamefont {M.}~\bibnamefont
  {Antonelli}}, \ and\ \bibinfo {author} {\bibfnamefont {B.}~\bibnamefont
  {Haskell}},\ }\href {\doibase 10.1103/physrevd.102.043018} {\bibfield
  {journal} {\bibinfo  {journal} {Physical Review D}\ }\textbf {\bibinfo
  {volume} {102}} (\bibinfo {year} {2020}),\
  10.1103/physrevd.102.043018}\BibitemShut {NoStop}%
\bibitem [{\citenamefont {Gavassino}\ and\ \citenamefont
  {Antonelli}(2021)}]{GavassinoFronntiers2021}%
  \BibitemOpen
  \bibfield  {author} {\bibinfo {author} {\bibfnamefont {L.}~\bibnamefont
  {Gavassino}}\ and\ \bibinfo {author} {\bibfnamefont {M.}~\bibnamefont
  {Antonelli}},\ }\href {\doibase 10.3389/fspas.2021.686344} {\bibfield
  {journal} {\bibinfo  {journal} {Front. Astron. Space Sci.}\ }\textbf
  {\bibinfo {volume} {8}},\ \bibinfo {pages} {686344} (\bibinfo {year}
  {2021})},\ \Eprint {http://arxiv.org/abs/2105.15184} {arXiv:2105.15184
  [gr-qc]} \BibitemShut {NoStop}%
\bibitem [{\citenamefont {{Olson}}(1990)}]{OlsonLifsh1990}%
  \BibitemOpen
  \bibfield  {author} {\bibinfo {author} {\bibfnamefont {T.~S.}\ \bibnamefont
  {{Olson}}},\ }\href {\doibase 10.1016/0003-4916(90)90366-V} {\bibfield
  {journal} {\bibinfo  {journal} {Annals of Physics}\ }\textbf {\bibinfo
  {volume} {199}},\ \bibinfo {pages} {18} (\bibinfo {year} {1990})}\BibitemShut
  {NoStop}%
\bibitem [{\citenamefont {Denicol}\ \emph {et~al.}(2008)\citenamefont
  {Denicol}, \citenamefont {Kodama}, \citenamefont {Koide},\ and\ \citenamefont
  {Mota}}]{DenicolKodama}%
  \BibitemOpen
  \bibfield  {author} {\bibinfo {author} {\bibfnamefont {G.~S.}\ \bibnamefont
  {Denicol}}, \bibinfo {author} {\bibfnamefont {T.}~\bibnamefont {Kodama}},
  \bibinfo {author} {\bibfnamefont {T.}~\bibnamefont {Koide}}, \ and\ \bibinfo
  {author} {\bibfnamefont {P.}~\bibnamefont {Mota}},\ }\href {\doibase
  10.1103/PhysRevC.78.034901} {\bibfield  {journal} {\bibinfo  {journal} {Phys.
  Rev. C}\ }\textbf {\bibinfo {volume} {78}},\ \bibinfo {pages} {034901}
  (\bibinfo {year} {2008})}\BibitemShut {NoStop}%
\bibitem [{\citenamefont {{Romatschke}}(2010)}]{Romatschke2010}%
  \BibitemOpen
  \bibfield  {author} {\bibinfo {author} {\bibfnamefont {P.}~\bibnamefont
  {{Romatschke}}},\ }\href {\doibase 10.1142/S0218301310014613} {\bibfield
  {journal} {\bibinfo  {journal} {International Journal of Modern Physics E}\
  }\textbf {\bibinfo {volume} {19}},\ \bibinfo {pages} {1} (\bibinfo {year}
  {2010})},\ \Eprint {http://arxiv.org/abs/0902.3663} {arXiv:0902.3663
  [hep-ph]} \BibitemShut {NoStop}%
\bibitem [{\citenamefont {{Pu}}\ \emph {et~al.}(2010)\citenamefont {{Pu}},
  \citenamefont {{Koide}},\ and\ \citenamefont {{Rischke}}}]{Pu2010}%
  \BibitemOpen
  \bibfield  {author} {\bibinfo {author} {\bibfnamefont {S.}~\bibnamefont
  {{Pu}}}, \bibinfo {author} {\bibfnamefont {T.}~\bibnamefont {{Koide}}}, \
  and\ \bibinfo {author} {\bibfnamefont {D.~H.}\ \bibnamefont {{Rischke}}},\
  }\href {\doibase 10.1103/PhysRevD.81.114039} {\bibfield  {journal} {\bibinfo
  {journal} {\prd}\ }\textbf {\bibinfo {volume} {81}},\ \bibinfo {eid} {114039}
  (\bibinfo {year} {2010})},\ \Eprint {http://arxiv.org/abs/0907.3906}
  {arXiv:0907.3906 [hep-ph]} \BibitemShut {NoStop}%
\bibitem [{\citenamefont {{Bemfica}}\ \emph {et~al.}(2020)\citenamefont
  {{Bemfica}}, \citenamefont {{Disconzi}},\ and\ \citenamefont
  {{Noronha}}}]{BemficaDNDefinitivo2020}%
  \BibitemOpen
  \bibfield  {author} {\bibinfo {author} {\bibfnamefont {F.~S.}\ \bibnamefont
  {{Bemfica}}}, \bibinfo {author} {\bibfnamefont {M.~M.}\ \bibnamefont
  {{Disconzi}}}, \ and\ \bibinfo {author} {\bibfnamefont {J.}~\bibnamefont
  {{Noronha}}},\ }\href@noop {} {\bibfield  {journal} {\bibinfo  {journal}
  {arXiv e-prints}\ ,\ \bibinfo {eid} {arXiv:2009.11388}} (\bibinfo {year}
  {2020})},\ \Eprint {http://arxiv.org/abs/2009.11388} {arXiv:2009.11388
  [gr-qc]} \BibitemShut {NoStop}%
\bibitem [{\citenamefont
  {{Gavassino}}(2022{\natexlab{b}})}]{GavassinoStabilityCarter2022}%
  \BibitemOpen
  \bibfield  {author} {\bibinfo {author} {\bibfnamefont {L.}~\bibnamefont
  {{Gavassino}}},\ }\href {\doibase 10.1088/1361-6382/ac79f4} {\bibfield
  {journal} {\bibinfo  {journal} {Classical and Quantum Gravity}\ }\textbf
  {\bibinfo {volume} {39}},\ \bibinfo {eid} {185008} (\bibinfo {year}
  {2022}{\natexlab{b}})},\ \Eprint {http://arxiv.org/abs/2202.06760}
  {arXiv:2202.06760 [gr-qc]} \BibitemShut {NoStop}%
\end{thebibliography}%

\label{lastpage}

\end{document}